\title{Network construction: A learning framework through localizing principal eigenvector}
\author{Priodyuti Pradhan\inst{1} \and Sarika Jalan\inst{2} }
\institute{1. Department of Mathematics, Bar-Ilan University, Ramat Gan 5290002, Israel\\
2. Complex Systems Lab, Department of Physics, Indian Institute of Technology Indore, Khandwa Road, Simrol, Indore-453552, India
}
\begin{document}

\maketitle

\begin{abstract}
Information of localization properties of eigenvectors of the complex network has applicability in many different areas which include networks centrality measures, spectral partitioning, development of approximation algorithms, and disease spreading phenomenon. For linear dynamical process localization of principal eigenvector (PEV) of adjacency matrices infers condensation of the information in the smaller section of the network. For a network, an eigenvector is said to be localized when most of its components are near to zero with few taking very high values. Here, we provide three different random-sampling-based algorithms which, by using the edge rewiring method, can evolve a random network having a delocalized PEV to a network having a highly localized PEV. In other words, we develop a learning framework to explore the localization of PEV through a random sampling-based optimization method. We discuss the drawbacks and advantages of these algorithms.  Additionally, we show that the construction of such networks corresponding to the highly localized PEV is a non-convex optimization problem when the objective function is the inverse participation ratio. This framework is also relevant to construct a network structure for other lower-order eigenvectors.
\end{abstract}

\section{Introduction}
Networks furnish a mathematical framework to model and decipher the collective behavior of complex real-world systems. Scrutiny of principal eigenvector (PEV) and the corresponding eigenvalue of the networks are known to provide an understanding of various local and global structural properties as well as the time evolution of dynamics on those networks 
\cite{pevec_nat_phys_2013}, \cite{castellano_localization_2017},  \cite{evec_localization_2017}, \cite{evec_localization_2020}, \cite{evec_loc_jaccard_2021}, \cite{neurons_localization_2014}. Further, different networks' eigenvector-based centrality measures have been proposed to understand the importance of the nodes forming those networks. For example, eigenvector centrality or Katz centrality provides a ranking to nodes of networks based on the entries of the PEV \cite{evec_loc_temporal_network_2017}, \cite{newman_book_2010}. Similarly, the PageRank algorithm which is based on the PEV of Google matrices predicts the importance of the web pages \cite{newman_book_2010}. Also, variants of principal component analysis and independent component analysis have led to radical developments in machine learning approaches \cite{machine_learning_2009}. Furthermore, conditions under which the degree vector of a network and PEV are correlated have been derived leading to the usage of degree vector instead of PEV to approximate various network analysis results \cite{deg_pev_corr_2012}. Furthermore, a network construction for which the message passing equations are exact have been explained and solutions near the critical point in terms of the PEV components have been analyzed \cite{message_passing_2017}. Recently, sensitivity in the network dynamics has been explored using networks' eigenvectors \cite{sensitivity_dynamics_2017}.
Particularly, development of community detection techniques based on the localization properties of eigenvector \cite{community_2010},\cite{community_2015},\cite{mux_community_2017} have been another significant contribution to the complex network analysis. Further, localization has an important role in quantum physics \cite{anderson_loc_1985}, mathematics \cite{loc_math_1_2010}, \cite{loc_math_2_2013}, \cite{loc_math_3_2003}, approximate algorithm development \cite{approx_algo_2015}, machine learning \cite{machine_learning_loc_2016}, numerical linear algebra, matrix gluing, structural engineering, computational quantum chemistry \cite{loc_invariant_subspace_2011}, \cite{anderson_loc_linear_alg_1999}, and in quantum information theory \cite{localization_in_mat_2016}. From real-world systems, it is also evident that underlying interaction patterns form the infrastructure for their different emerging  dynamical responses. For instance, information or rumor propagates through the Facebook-Twitter networks; in brain, the neurons interact to perform specific functions over the underlying network. Moreover, reconfiguration or rewiring of the functional brain networks are required during the learning phases \cite{dynamic_reconfig_2011}. Therefore a scrutiny of the network architecture is important as `underlying structure has crucial  impact on its function' and vice-versa \cite{rev_Strogatz_2001}.

Due to the versatile applications of the eigenvector properties, we analyze the network architecture which optimizes a specific behavior of its PEV. In this article, we study the network structure from a different point of view. Instead of analyzing the properties of a network, we construct a sequence of networks, $\{\mathcal{G}_1, \mathcal{G}_2, \ldots, \mathcal{G}_{final} \}$ optimizing a few specific behaviors of the PEV. The primary aim of this framework is to examine the sequence of networks and learn the network properties collectively when optimizing a function $\zeta:\Re^n \rightarrow \Re$ on the eigenvectors. In other words, we can represent an undirected network by an adjacency matrix which encodes the interactions or relations among $n$ objects (nodes) of a real-world complex system. We consider that the adjacency matrix is symmetric, and hence from the eigenvalue equation, we have $n$ number of eigenvectors which represent $n$ different solutions of the system. Each eigenvector has a different meaning corresponding to the underlying system. We have an interest in network architecture that satisfies few particular behaviors of the eigenvector (solution). The entry of an eigenvector for a symmetric matrix may contain negative, zero, or positive values. Can we tune the eigenvector entries and construct the network structure accordingly$?$ The question one can ask that how values of the eigenvector and the corresponding network structure are related. Reversely, to get a particular behavior for the eigenvector what will be the interaction matrix which in our case is the network's adjacency matrix. The tuning can be performed based on a particular function $\zeta$, and for our purpose here it is the inverse participation ratio (IPR) of the PEV (Eq. (\ref{eq_IPR})). 

It has been demonstrated that few structural features of networks, such as the presence of a hub node, the existence of dense subgraphs, or a power-law degree distribution may lead to the localization of the PEV \cite{MZN2014},\cite{Goltsev_prl2012}. However, the questions which still need investigation and dealt with in this article are; 
\begin{itemize}
\item[a.] Starting with a network having a delocalized PEV, how can one gradually localize the PEV behavior and construct the corresponding network structures?
\item[b.] What is the particular architecture of the optimized network corresponding to a highly localized PEV?
\end{itemize}
Here, we develop a general framework to evolve a  network structure based on rewiring its edges as the PEV of the corresponding adjacency matrix goes from the delocalized to the highly localized state. We devise three different algorithms which use random edge sampling (Hub-based, Monte-Carlo-based, and simulated annealing based) to find out the final network structure. Moreover, we show that the optimized network concerning the highly localized PEV has a distinctive architecture.

The article is organized as follows: Section 2 motivates to study PEV localization. Section 3, contains the notations and definitions used in the later discussion. Also, it includes a brief explanation and formulation of our work. Section 4, illustrates various algorithms on edge rewiring-based optimization in detail. Finally, section 5, summarizes the current study and discusses the open problems for further investigations.

\section{Motivation}
We know relations or interactions are everywhere either interactions of power grid generators to provide proper functioning of the power supply over a country, or interactions of bio-molecules inside cell to proper functioning of cellular activity or interactions of neurons inside brains to perform specific functions or interactions among satellites to provide accurate GPS services or interactions among quantum particle to form quantum communications or recent coronavirus spread \cite{rev_Strogatz_2001}, \cite{dynamic_reconfig_2011}, \cite{quantum_internet}, \cite{GPS_network}, \cite{COVID-19}. For all of them there are two things that are common –- network structure and interactions mechanism among the components or dynamics running on top of that. For instance, for a linear dynamical model (Fig. \ref{schematic_pev_lov}) 
\begin{figure}[t]
\begin{center}
\includegraphics[width=3.2in, height=2in]{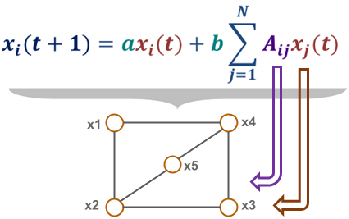}
\caption{A linear dynamical model. Here, $x_i$ captures some information on node $i$ and $A_{ij}$'s are the structure of the network and $a$ and $b$ are the model parameters. We seek to understand the long term behavior of the information exchange pattern.}
\label{schematic_pev_lov}
\end{center}
\end{figure}
\begin{equation}\label{power_iteration}
\bm{x}(t+1) =  {\bf M}^{t+1} \bm{x(0)}\overset{t\rightarrow \infty}{\Longrightarrow} \bm{x}^{*} \sim  \bm{v}_1^{\bf M}  
\end{equation}
where ${\bf M}=a{\bf I}+b{\bf A}$ is the transition matrix, $\bm{x(0)}$ is the initial state of the system, $\bm{x}^{*}$ is the stationary state and $\bm{v}_1^{\bf M}$ is the PEV of {\bf M}. To find the pattern or understand the behavior of the linear dynamics on networks, we iterate the matrix multiplication process for a large number of times or we can find out the PEV ($\bm{v}_1^{\bf M}$) of {\bf M} (Eq. (\ref{power_iteration})). There is another way, we can also solve the problem. We can see that eigenvectors of {\bf M} are the same as eigenvectors of {\bf A} \cite{pevec_nat_phys_2013}.
\begin{equation}\nonumber
{\bf M}\bm{v}_i^{\bf A}=a{\bf I}\bm{v}_i^{\bf A}+b{\bf A}\bm{v}_i^{\bf A}
    =[a+b\lambda_i^{\bf A}]\bm{v}_i^{\bf A}
    =\lambda_i^{\bf M}\bm{v}_i^{\bf A}
\end{equation}
where {\bf I} is the identity matrix and $\lambda_i^{\bf M}=a+b\lambda_i^{\bf A}$ and. Thus, $\bm{x}^{*} \sim  \bm{v}_1^{\bf M}=\bm{v}_1^{\bf A}$. {\em Hence, understanding the behavior of the PEV of the adjacency matrix is enough to understand the information flow pattern for linear dynamical systems.} Therefore, the network properties which enhance PEV localization of adjacency matrices can implicitly restrict the linear-dynamics in a smaller section of the network. A fundamental question at the core of the structural-dynamical relation is that: How can we find network structure having highly localized PEV? Here, we provide random-sampling based algorithm to construct localized network structure. 

\section{Problem formulation}
We represent a graph, $\mathcal{G} =\{V,E\}$, where $V=\{v_1, v_2,\ldots,v_n\}$ is the set of vertices and $E=\{e_1, e_2,\ldots,e_{m}|e_{m}=(v_i, v_j)\} \subseteq U$ is the set of edges. We define the universal set $U = V \times V=\{(v_i, v_j)| v_i, v_j \in V\; \text{and}\; i \neq j\}$ which contains all possible ordered pairs of vertices excluding the self-loops and the complementary set can be defined as $E^c = U - E=\{(v_i, v_j)| (v_i, v_j) \in U\; \text{and}\; (v_i, v_j) \notin E\}$ i.e., $E\cap E^c=\varnothing $ and $E \cup E^c= U$. We denote the adjacency matrices corresponding to $\mathcal{G}$  as ${\bf A} \in \Re^{n \times n}$ and which can be defined as $(a)_{ij} = 1$, if  $v_i \sim v_j$ and $0$ otherwise.
The degree of a node can be represented as $d(v_i)=\sum_{j=1}^n a_{ij}$, and the average degree of $\mathcal{G}$ can be defined as $\langle k \rangle = \frac{1}{n}\sum_{i=1}^n d(v_i)$. Here, we consider $|V|=n$, $|E|=m$, and $|E^c|=\frac{n(n-1)}{2}-m$.  
The \textit{spectrum} of $\mathcal{G}$ is the set of the eigenvalues $\{\lambda_1, \lambda_2, \ldots, \lambda_n\}$ of ${\bf A}$. Without loss of generality we can order the eigenvalues of ${\bf A}$ as $\lambda_1 > \lambda_2 \geq \cdots \geq \lambda_n$ and corresponding eigenvectors as $\bm{x}_1, \bm{x}_2, \cdots, \bm{x}_n$ respectively. Here, ${\bf A}$ is a real symmetric matrix, and each has real eigenvalues. In addition, the networks are connected. Hence, we know from the Perron-Frobenius theorem \cite{miegham_book2011} that all the entries in the PEV ($\bm{x}_1$) of ${\bf A}$ are positive. We quantify localization of the PEV through the inverse participation ratio, which is the sum of fourth power of the eigenvector entries and calculate \cite{Goltsev_prl2012}, \cite{evec_localization_2017} as follows:
\begin{equation} \label{eq_IPR}
Y_{\bm{x}_{1}}=\sum_{i=1}^n (x_{1})_{i}^4 
\end{equation}
where $(x_{1})_{i}$ is the $ith$ component of $\bm{x}_1$ and $||\bm{x}_1||_2^2=1$. A delocalized eigenvector with component $(\frac{1}{\sqrt{n}},\frac{1}{\sqrt{n}},\ldots,\frac{1}{\sqrt{n}})$ has $Y_{ \bm{x}_1}=\frac{1}{n}$, whereas the most localized eigenvector with components $(1,0,\ldots,0)$ yields an IPR value equal to $Y_{\bm{x}_1} = 1$. A network is said to be regular if each node has the same degree \cite{miegham_book2011}. It also turns out that for any regular graph (Theorem 6 \cite{miegham_book2011}), we get PEV, $\bm{ x}_1=(\frac{1}{\sqrt{n}},\frac{1}{\sqrt{n}},\ldots,\frac{1}{\sqrt{n}})$. Hence, $Y_{\bm{ x}_1}=\frac{1}{n}$, corresponds to the most delocalized PEV. Therefore, for any regular network IPR value of the PEV provides the lower bound. Hence, a sparse as well as a dense regular network contains delocalized PEV. Now, we can consider a disconnected graphs where each node is isolated from each other and each node has a self-loop. The adjacency matrix can be represented with the $n \times n$ identity matrix. For this disconnected networks, $Y_{\bm{ x}_1} = 1$. In another situation if we consider only $n$ number of isolated nodes with $|E|=0$. We have a zero matrix and for which we can choose $\bm{ x}_1=(1,0,\ldots,0)$ and $Y_{\bm {x}_1} = 1$. These are the special cases. Additionally, for any disconnected network with less than $n$ number of components, the PEV entries might be zeros. Hence, for a connected network, IPR value lies between $ 1/n \leq Y_{\bm{ x}_{1}}<1$. Therefore, it is evident that finding out a network architecture for a given $n$ with delocalized PEV is easier than searching for a connected network structure with highly localized PEV. For a given $n$ and $m$, our aim is to get a connected network
which has the most localized PEV. In other words, we can state the problem as, we search for a binary symmetric matrix (${\bf A}_{final}$) which is irreducible and which has the PEV with maximum IPR value. Also we have an interest to know the properties of sequence of the adjacency matrices $\{{\bf A}_1, {\bf A}_2, \ldots, {\bf A}_{final} \}$ during the searching process which can maximize the IPR value. The search space for a given $n$ and $m$ is of the order $O(n^{2m})$ \cite{search_space}. Therefore, we use random-sampling to formulate this problem.
\begin{figure}[t]
\begin{center}
\includegraphics[width=3.8in, height=2.4in]{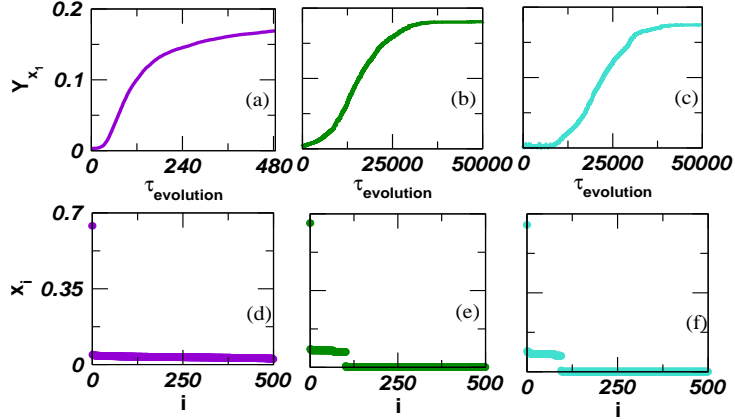}
\caption{Changes of the IPR value during the random-sampling process (a) hub-based algorithm (b) MC-based algorithm and (c) SA-based algorithm. Sorted PEV entries of the final network obtained from (d) hub-based (e) MC based and (f) SA based algorithm has been portrayed. Network size: $n=500$, $\langle k \rangle =10$, and $\tau_{evolution}$ counts the number of edge rewirings.}
\label{opt_ipr}
\end{center}
\end{figure}

\section{Methodology and Results}
We use randomized algorithms based on edge rewiring to construct a network architecture that corresponds to a highly localized PEV in an iterative manner. It is very natural that modification in the entries of an adjacency matrix leads to a change in the spectral properties (eigenvalues and eigenvectors) which also simultaneously change the network architecture. We use this fact to develop randomized algorithms. The modification in the adjacency matrix can be performed by removing or adding edges as well as nodes or rearrangements of the edges in $\mathcal{G}$ \cite{evec_localization_2017}. Here, we devise algorithms by rearrangement of the edges to get highly localized PEV when $\mathcal{G}$ remains connected, and the number of nodes and edges remain fixed. In the following, we discuss the algorithms in detail.

\subsection{Hub-based algorithm}
It is well known that networks with localized PEV have a hub node \cite{MZN2014}. Here, we attempt to connect PEV localization and the corresponding network structure. We use the presence of a hub node heuristic to develop a randomized algorithm. It iteratively forms a hub node starting from an Erd\"os-R\'enyi (ER) random network and records the IPR value as well as stores the sequence of networks $\{\mathcal{G}_{ER},\mathcal{G}_{2},\ldots,\mathcal{G}_{final}\}$. Starting the algorithm, with an ER random network is an artifact as it provides the delocalized PEV \cite{deloc_pev}. The initial ER random network, $\mathcal{G}(n,p)$ is generated with an edge probability $p=\frac{\langle k \rangle}{n}$. Without loss of generality, we consider $v_1$ will be the hub node at the end of the iterative process. We select an edge $e_r \in V-\{v_1\} \times V-\{v_1\}$ uniformly at random from $\mathcal{G}$, and remove it. Simultaneously, add it between $v_1$ to $v_k$, if $(v_1,v_k) \notin E$. We repeat the process until $v_1$ connects to all the remaining nodes and becomes the hub node (algorithms \ref{hub}). At the end of the iterative process we get $d(v_1)=n-1$ for $\mathcal{G}_{final}$.
This random iterative hub formation algorithm keeps unchanged the network size. We can see the IPR value during the evolution from Fig. \ref{opt_ipr}(a). Interestingly, it shows the changes in the IPR value for the sequence of networks $\{\mathcal{G}_{ER},\mathcal{G}_{2},\ldots,\mathcal{G}_{final}\}$ collectively. Moreover, we depict the sorted PEV entry values in Fig. \ref{opt_ipr}(d), which indicates the magnitude of the maximum PEV entry value is much larger than the rest of the entry values. The question will arise whether the IPR value of $\mathcal{G}_{final}$ is close to the optimal. We use some results from previous research on the upper bound on the maximal entry value of the PEV for a connected network to make a possible conclusion about the optimality of our results. The maximal PEV entry value can be obtained for the star network and it is $1/\sqrt{2}$ and all other entries are same which are $1/\sqrt{2(n-1)}$ \cite{bound_pev_entries_2000}. The maximum PEV entry value obtains from the $\mathcal{G}_{final}$ is close to $1/\sqrt{2}$. This simple hub node formation-based algorithm works well and is easier to implement to get networks with highly localized PEV. Later on, we devise other algorithms which provide better results than algorithm \ref{hub}. We use the C++ language and STL library to implement all the algorithms.
\begin{algorithm}[thb]
\label{hub}
\SetCommentSty{sf}
\caption{Hub-based($n$, $\langle k \rangle$)}
\Indp 
${\bf A} \leftarrow \mathcal{G}(n,p)$\\
$Y_{\bf x_1} \leftarrow \zeta(\bf x_1)$ \\ 
 \While{$k=2$ to $n$}
{
  choose an edge $(v_i,v_j) \in V-\{v_1\} \times V-\{v_1\}$ uniformly at random\\
  \If{$(v_1,v_k) \notin  E$}
  {
  remove $(v_i,v_j)$ and add as $(v_1,v_k)$ in $\mathcal{G}$ and denotes it as $\mathcal{G}^{'}$\\
  if $\mathcal{G^{'}}$ is not connected then goto step 4\\
  ${\bf A} \leftarrow {\bf A^{'}}$ \\
  $Y_{\bf x_1} \leftarrow \zeta(\bf x_1)$
  }
  store $Y_{\bf x_1}$ and ${\bf A}$
}
\Indm
\end{algorithm}
To find out the eigenvector, we bind the LAPACK routine \emph{ssyevr} with C++ code.
\begin{figure}[t]
\begin{center}
\includegraphics[width=3.1in, height=1in]{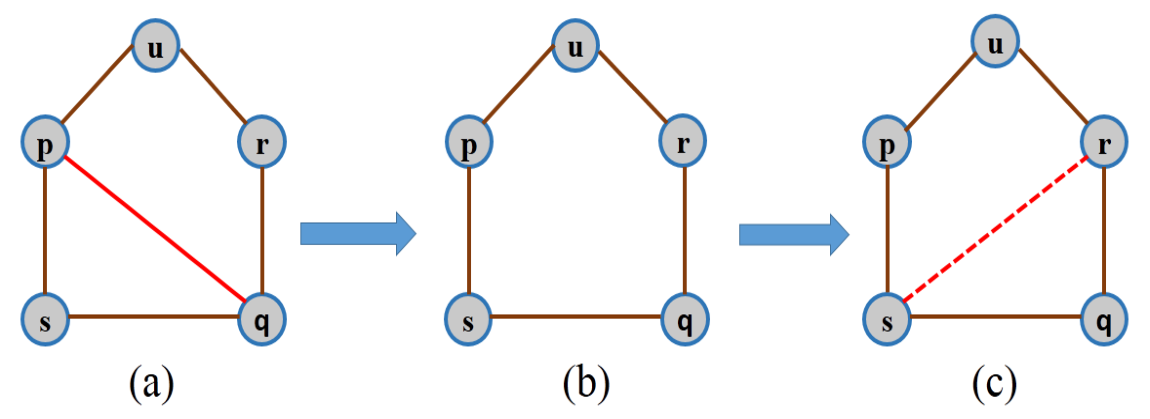}
\caption{Single edge rewiring. Select an edge randomly between a node pair $(p,q)$ and remove it. Add the edge between a randomly selected node pair $(r,s)$.}
\label{rewire2}
\end{center}
\end{figure}
%
%
%
%

\subsection{Monte-Carlo based algorithm}
In the previous algorithm, we select an edge at random from the set $V-\{v_1\} \times V-\{v_1\}$ and always add it to the node $v_1$. However, if we make the position of the edge removal and addition (or we can say edge rewiring) more flexible, and accept those edge rewirings which can improve the IPR value, we get an impressive result. In particular, we achieve significant improvement as well as in the network structure than the Hub-based algorithm. We formulate an optimization problem as: given a connected network $\mathcal{G}$ with $n$ vertices, $m$ edges and a function $\zeta:\Re^{n} \rightarrow \Re$, we want to compute the maximum possible value of an objective function $\zeta(\bm{ x}_1)=\sum_{i=1}^n (x_1)_i^4$ over all the simple, connected, and undirected network $\mathcal{G}$. The optimization problem can be written as finding an irreducible binary symmetric matrix ${\bf A}$, for which $\sum_{i=1}^n (x_{1})_{i}^4$ will be maximum such that ${\bf A} \bm{x}_1=\lambda_1 \bm{x}_1$, $||\bm{ x}_1||^2_2=1$ and $(x_1)_i>0$, $i \in \{1,2,\ldots,n\}$ (detail about the objective function in Appendix). The first constraint simply says that $\bm{x}_1$ is the PEV of a symmetric matrix ${\bf A}$ and it is in $l_2$ norm. The second constraint implicitly stipulates that the network must be connected (from the Perron-Frobenius theorem). We refer the initial network as $\mathcal{G}_{init}$ and the optimized network as $\mathcal{G}_{final}$. The network evolution emerges sequence of networks as $\{\mathcal{G}_{init}, \mathcal{G}_1, \mathcal{G}_2, \ldots, \mathcal{G}_{final} \}$.

For a single-edge rewiring, we choose an edge $e_i \in E$ uniformly at random from $\mathcal{G}$ and remove it (Fig. \ref{rewire2}). At the same time, we introduce an edge in the $\mathcal{G}$ from $E^c$, which preserves the total number of edges during the network evolution in $\mathcal{G}$. Hence, each edge rewiring is a two-step process, (i) removal of an edge followed by (ii) addition of an edge (Fig. \ref{rewire2}). 
We remark that during the network evolution there is a possibility that an edge rewiring disconnects the network. To avoid this situation, we only approve those rewirings which yield the network connected. To check the connectedness after an edge rewiring, we use a depth-first search (DFS) algorithm 
\begin{figure}[t]
\begin{center}
\includegraphics[width=2.6in, height=1.8in]{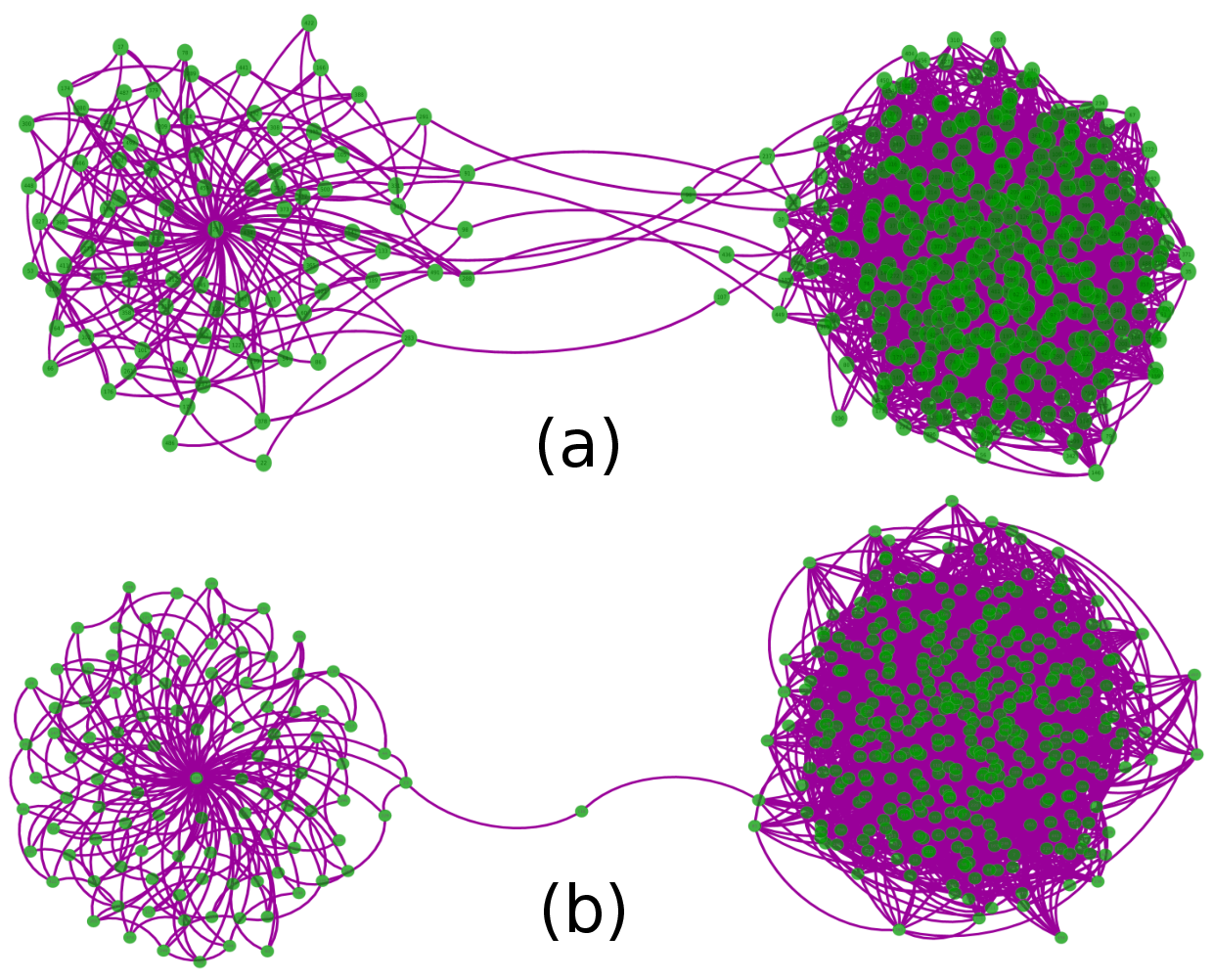}
\caption{(a) Intermediate network structure during optimization (b) Optimized structure obtained at the end of the iteration process with $Y_{x_1} \approx 0.20$; $n=500$ nodes and $\langle k \rangle =10$ edges.}
\label{opt_net}
\end{center}
\end{figure}
\cite{algorithms_cormen_2009}.
\begin{algorithm}[!htb]
\label{monte_carlo}
\SetCommentSty{sf}
\caption{MC-based-IPR-Optimization($n$, $\langle k \rangle$)}
\Indp
${\bf A} \leftarrow \mathcal{G}(n,p)$\\
$Y_{\bf x_1} \leftarrow \zeta(\bf x_1)$\\ 
\While{$Y_{x_1}$ not saturated}
{
  rewire an edge uniformly at random in $\mathcal{G}$ and denotes it as $\mathcal{G'}$ \\
  if $\mathcal{G^{'}}$ is not connected then goto step 4\\
  $Y_{\bf x_1}^{'} \leftarrow \zeta(\bf x_1^{'})$\\
  \If{$Y_{\bf x_1}^{'} > Y_{\bf x_1}$ }
      {
         ${\bf A} \leftarrow {\bf A^{'}}$  \\
         $Y_{\bf x_1} \leftarrow Y_{\bf x_1}^{'}$
       }
  store $Y_{\bf x_1}$ and ${\bf A^{'}}$
}  
\Indm
\end{algorithm}

The Monte-Carlo (MC) based optimization (in algorithm \ref{monte_carlo}) can be summarized as follows. We find ${\bm{x}_{1}}$ of an ER random graph $\mathcal{G}$ and calculate the IPR value of ${\bm{x}_1}$. We rewire one edge uniformly and independently at random in $\mathcal{G}$ to obtain another graph $\mathcal{G'}$. We check whether $\mathcal{G'}$ is connected, if not the edge rewiring step is repeated till we get another $\mathcal{G'}$ which is a connected network.  We find out the PEV of ${\bf A'}$ matrix and calculate the IPR value of ${\bm{x}_1^{'}}$. We replace ${\bf A}$ with ${\bf A'}$, if $Y_{\bm{x}^{'}_{1}} > Y_{\bm{x}_{1}}$. Steps are repeated until the IPR value gets saturated which corresponds to the optimized network. The recorded value of $Y_{\bm{x}_{1}}$ variable during the optimization process gives an increment in the IPR value which is depicted in Fig \ref{opt_ipr}(b). We depict a network at an intermediate evolution stage and the final optimized one in Fig. \ref{opt_net}. It indicates that the optimized network structure contains two graph components connected via a single node as we have seen in \cite{evec_localization_2017}. The sorted PEV entries obtained from the optimized network in Fig. \ref{opt_ipr}(e) portrays that we are very close to the optimal IPR value. However, there is a difference in the eigenvector entries from Fig. \ref{opt_ipr}(d) and the $\mathcal{G}_{opt}$ obtain from MC based algorithm has maximum degree, $d_{max}<<n-1$. It indicates that optimal IPR value depends on the particular entry value behavior of PEV. Here, we consider ER random network as initial. Now, if we change the initial network instead of ER random network, then there is a chance of failure to the MC method. Interestingly, we have found out one such situation and discussed it in the following using the simulated annealing-based method.
\begin{figure}[t]
\begin{center}
\includegraphics[width=3.2in, height=1.4in]{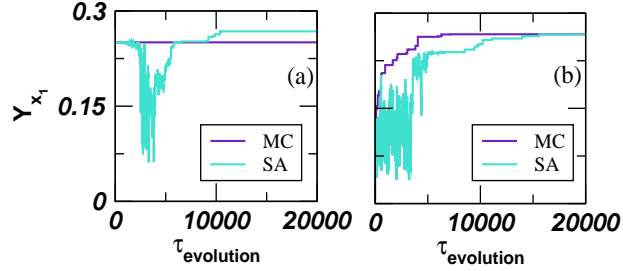}
\caption{Changes of IPR value when initial network as (a) star graph (b) path graph, where $n=500$ nodes.}
\label{ipr_var}
\end{center}
\end{figure}

\begin{algorithm}[!htb]
\label{SA}
\SetCommentSty{sf}
\caption{SA-based-IPR-Optimization($n$, $\langle k \rangle$, $temp$, $\kappa$) }
\Indp 
${\bf A} \leftarrow \mathcal{G}(n,p)$\\
$Y_{\bf x_1} \leftarrow \zeta(\bf x_1)$\\
$temp = 0.9$\\
\While{$Y_{\bf x_1}$ not saturated}
{
  rewire an edge uniformly at random in $\mathcal{G}$ and denotes it as $\mathcal{G'}$ \\
  if $\mathcal{G^{'}}$ is not connected then goto step 5\\
  $Y_{\bf x_1}^{'} \leftarrow \zeta(\bf x_1^{'})$\\
  \If{$Y_{\bf x_1}^{'} > Y_{\bf x_1}$ }
      {
         ${\bf A} \leftarrow {\bf A^{'}}$  \\
         $Y_{\bf x_1} \leftarrow Y_{\bf x_1}^{'}$
       }
   \Else
   {
    pick a random number $r$ from uniform distribution in (0,1) \\
     \If{$r<e^{\frac{(Y_{\bf x_1}^{'}-Y_{\bf x_1})}{(\kappa*temp)}}$}
     { 
        ${\bf A} \leftarrow {\bf A^{'}}$  \\
         $Y_{\bf x_1} \leftarrow Y_{\bf x_1}^{'}$
     }
   }
  store $Y_{\bf x_1}$ and ${\bf A^{'}}$\\
  $temp \leftarrow temp*0.998$
}  
\Indm
\end{algorithm}
\subsection{Simulated annealing based algorithm}
The simulated annealing (SA) is a randomized algorithm widely used in solving optimization problems motivated by the principles of statistical mechanics \cite{simuated_annealing_1983}. The important part of the SA-based algorithm is accepting solutions that satisfy the Gibbs-Boltzmann function $e^{-\mathcal{E}/\kappa*temp}$. In our problem, we consider the objective function to maximize instead of minimizing, so we have made the changes accordingly in the algorithm. We set the initial temperature, $temp=0.9$ and after each iteration decreases it by the cooling schedule $tem=tem*0.98$ and also fix the Boltzmann constant $\kappa$ to $100$. 

\begin{figure}[t]
\begin{center}
\includegraphics[width=3.5in, height=2in]{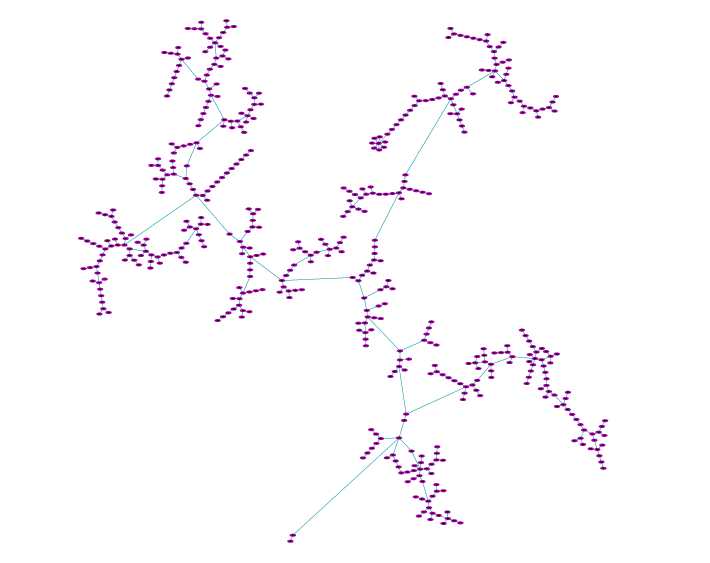}
\caption{Optimized structure obtain through the SA algorithm when initial structure is the star network. The $Y_{\bm{x}_1} \approx 0.267$, $n=500$ nodes and $m=499$ edges.}
\label{opt_net_from_star}
\end{center}
\end{figure}
\begin{figure}[t]
\begin{center}
\includegraphics[width=2.4in, height=0.7in]{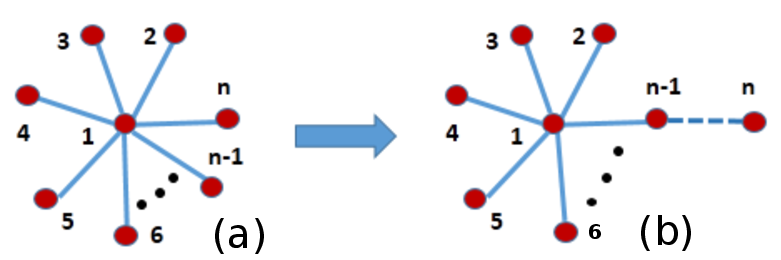}
\caption{MC algorithm with (a) initial star network (b) rewired network configuration.}
\label{star_falure}
\end{center}
\end{figure}
If we consider a star network with $n$ nodes labelled as $\{1,2,\ldots,n\}$ with the hub node being labelled with $1$, then using eigenvalue equation corresponding to $\lambda_{1}$, we get the PEV, 
\begin{equation}\nonumber
{\bm{x}_1}=\biggl(\frac{1}{\sqrt{2}},\frac{1}{\sqrt{2(n-1)}},\ldots, \frac{1}{\sqrt{2(n-1)}}\biggr)    
\text{ and }
Y_{\bm{x}_1}^{s} = \frac{1}{4} + \frac{1}{4(n-1)}    
\end{equation}
Therefore, when $n \rightarrow \infty$, we get $Y_{\bm{x}_1} \rightarrow \frac{1}{4} \approx 0.25$ for the star network. By looking the PEV entries and the IPR value, it might happen that star network has the most localized PEV, but it is not true.
It becomes clear if we provide a star structure as an initial network to the MC-based algorithm. We do not show any increment in the IPR value, and it sticks to the local maxima (Fig. \ref{ipr_var}(a)). We explain the reason behind the failure in the following. 
After removing an edge connected to the hub node in the star network (Fig. \ref{star_falure}(a)), it must be connected to any peripheral node (Fig. \ref{star_falure}(b)). From the adjacency matrix of the rewired network structure (in Fig. \ref{star_falure}(b)) we solve the eigenvalue equation and find out, 
\begin{equation}\nonumber
{\bm{x}_1}=\biggl((x_1)_1,\frac{1}{\lambda_1}(x_1)_1,\ldots,\frac{1}{\lambda_1}(x_1)_1, \frac{\lambda_1}{\lambda_1^2-1}(x_1)_1,\frac{1}{\lambda_1^2-1}(x_1)_1\biggr)    
\end{equation}
and
\begin{equation}\nonumber
Y_{\bm{x}_1}^{c} = (x_1)_1^4 \biggl(1+\frac{n-3}{\lambda_1^4} + \frac{\lambda_1^4+1}{(\lambda_1^2-1)^4}\biggr)    
\end{equation}
where $(x_1)_1^2=\frac{\lambda_1^4 - \lambda_1^2}{2\lambda_1^4+(n-3)\lambda_1^2-(n-3)}$ and $\lambda_1^2= \frac{(n-1)+\sqrt{(n-1)^2-4(n-3)}}{2}$. We observe numerically that for varying network size upto $n=2000$, $Y_{\bm{x}_1}^{c} < Y_{\bm{x}_1}^{s}$ and after that $Y_{\bm{x}_1}^{c} \approx Y_{\bm{x}_1}^{s}$. Hence, MC based algorithm does not accept the configuration in Fig. \ref{star_falure}(b) and iterates forever without enhancement to the IPR value. However, in case of SA based algorithm, initially, when the $temp$ is high then from uniform distribution the algorithm accepts the configuration as in Fig. \ref{star_falure}(b). It leads to a better optimal value and gives an optimized structure which is different from the star network (Fig. \ref{opt_net_from_star}). Moreover, giving a path network as an initial network to both the MC and SA based method gives an optimized structure as in Fig. \ref{opt_net_from_star} and get an improvement in the IPR value (Fig. \ref{ipr_var}(b)). It indicates that success of the MC algorithm depends on the choice of the initial network. Furthermore, from the numerical simulations, we have learned that when the number of edges $|E| >> n-1$ then the MC and SA based algorithm works well to find out an optimized network structure. 

\section{Conclusion}
We explore different network construction algorithms which optimizing the behavior of PEV. We construct the network structure through the optimization process that possesses highly localized PEV quantified by the IPR value. This approach provides a comprehensive way to investigate not only the optimized network but also intermediate networks before an optimized structure is found. 
Furthermore, we restrict our study to the adjacency matrix of the network. It is also interesting to examine for other matrices related to graphs such as Laplacian matrix, modularity matrix. 
Here, eigenvector behavior has been regulated based on the particular function which is IPR. It is good to define another function that can tune the negative or zeros entries of the eigenvectors or some part of the eigenvector and based on that one can construct a network structure. This framework also helps to construct a network for other lower-order eigenvectors. Low order eigenvectors have also been studied to develop machine learning tools. They used IPR as well as another kind of measure for the eigenvector localization, called statistical leverage scores \cite{low_order_evec_2013}, \cite{leverage_score_2011} which has an impact on statistics and modern big data analysis.

It is also interesting to distribute weights in a weighted network so that PEV becomes localized \cite{Goltsev_prl2012}. Moreover, removing edges make the PEV delocalized, and also adding edges to the network produces delocalized PEV, but it may be possible to add a proper number of edges to the network which forms the highly localized PEV. We have not included the complexity analysis of the algorithms, and it is an exciting part to do in the future. Finally, we devise edge rewiring-based optimization algorithms which allow us to learn about the network structure from the PEV and it may be relevant to be used later to develop machine learning tools.

\noindent {\bf Acknowledgment:} SJ acknowledges DST, Govt. of India grant EMR/2014/000368 for financial support. PP is indebted to Michael C. Grant (StackExchange, CVX Research, Stanford University) and Amit Reza (IIT Gandhinagar, India) for useful discussion on the optimization problem and members of CSL at IIT Indore for discussions.

\section*{Appendix}

Next, we discuss in details about the objective function and constraints. 
%
\begin{lemma}
$\zeta({\bm{x}_1})=\sum_{i=1}^n (x_{1})_{i}^4$ is a convex function when $(x_1)_i \in (0,1)$, $i \in \{1, 2, \ldots, n\}$. 
\end{lemma}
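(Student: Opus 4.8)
The plan is to show convexity of $\zeta$ by the standard route: establish that $\zeta$ is twice differentiable on the open region of interest and that its Hessian is positive semidefinite there. Since $\zeta({\bf x_1}) = \sum_{i=1}^n (x_1)_i^4$ is a separable sum, this reduces to a one-variable calculation. I would write $g(t) = t^4$ and compute $g''(t) = 12t^2$, which is nonnegative for all real $t$, and in particular strictly positive for $t \in (0,1)$. Because $\zeta$ is a sum of functions each depending on a single coordinate, its Hessian is the diagonal matrix $\operatorname{diag}\big(12(x_1)_1^2, \ldots, 12(x_1)_n^2\big)$, which is positive semidefinite (indeed positive definite when every $(x_1)_i \neq 0$). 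Hence $\zeta$ is convex on the box $(0,1)^n$.

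The key steps in order: first, recall that a sum of convex functions is convex, so it suffices to treat each summand $t \mapsto t^4$ separately; second, note a univariate twice-differentiable function is convex on an interval iff its second derivative is nonnegative there; third, compute $g'(t) = 4t^3$ and $g''(t) = 12t^2 \ge 0$; fourth, conclude that each coordinate map is convex on $(0,1)$ and therefore $\zeta$ is convex on $(0,1)^n$. Alternatively, one can bypass derivatives entirely and invoke the fact that $t \mapsto t^p$ is convex on $[0,\infty)$ for $p \ge 1$ (a consequence of Jensen's inequality or of the power mean inequality), then restrict to the subinterval $(0,1)$; I would likely mention this as a one-line alternative.

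Honestly, there is no real obstacle here — the statement is elementary, and the only thing to be careful about is being explicit that convexity of a multivariate separable function follows from convexity of each summand, and that the domain $(0,1)^n$ (or more precisely the relevant portion of the unit sphere's positive orthant, but the lemma is stated on the box) is convex, so the notion of convexity of $\zeta$ on it is well posed. If one wants the Hessian argument to be fully rigorous one should also note that $\zeta$ extends smoothly to an open neighborhood of $(0,1)^n$ in $\Re^n$, which is immediate since $t^4$ is a polynomial. The mild subtlety worth flagging for the reader is that this lemma does \emph{not} contradict the paper's later claim that the optimization problem is non-convex: here $\zeta$ is being viewed as a function of the eigenvector coordinates on a box, whereas the optimization is over adjacency matrices subject to the eigenvalue constraint ${\bf A x_1} = \lambda_1 {\bf x_1}$, and it is the interaction between the objective and that constraint set (together with the binary/irreducibility constraints) that destroys convexity.
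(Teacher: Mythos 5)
Your proposal is correct and follows essentially the same route as the paper: both compute the Hessian of $\zeta$, observe that it is the diagonal matrix with entries $12(x_1)_i^2$, and conclude positive semidefiniteness on $(0,1)^n$. The extra remarks you add (separability, the one-line alternative via convexity of $t\mapsto t^p$ for $p\ge 1$, and the clarification that this does not contradict the later non-convexity claim) are sound but do not change the underlying argument.
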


\begin{proof}
Convexity of the objective function $\zeta({\bm{x}_1})$ can be examined by employing Hessian test \cite{nonlinear_opt_2014}. One can construct the Hessian matrix from $\zeta({\bm{x}_1})$ and show that it is positive semidefinite.
The partial derivative of $\zeta({\bm{x}_1})$ are given by 
\begin{equation}
 \frac{\partial \zeta({\bm{x}_1})}{\partial (x_1)_{i}}=4(x_1)_i^3, \; i=\{1,2,\ldots,n\} 
\end{equation}
and hence,
\begin{equation}
 \frac{\partial^2 \zeta({\bm{x}_1})}{\partial (x_1)_{i}\partial (x_1)_{j}}=
 \begin{cases}
 12(x_1)_i^2 & \quad i = j \\
 0           & \quad i \neq j\\
\end{cases}
\end{equation}
Now we can write the Hessian matrix as  
\[
  \nabla^2\zeta({\bm{x}_1}) =
   \begin{bmatrix}
    12(x_1)_1^2 & & \text{\LARGE0}\\
    & \ddots & \\
   \text{\LARGE0} & & 12(x_1)_n^2
   \end{bmatrix}
\]
Hessian matrix is positive semidefinite if all the eigenvalues of $\nabla^2\zeta({\bm{x}_1})$ are non-negative. Here it is clear that eigenvalues of $\nabla^2\zeta({\bm{x}_1})$ are $\{12 (x_1)_i^2: i = 1,2, \cdots n\}$ as $\nabla^2\zeta({\bm{x}_1})$ is a diagonal matrix.
Since $(x_1)_i \in (0,1)$, therefore all the eigenvalues of $\nabla^2\zeta({\bm{x}_1})$ are nonnegative, and hence the Hessian matrix is a positive semidefinite matrix. Therefore, the objective function $\zeta({\bm{x}_1})$ is a convex function. 
\end{proof}

\begin{lemma}
 $\mathcal{C}=\{ {\bm{x}_1} \in (0,1)^n|\;||{\bm{x}_1}||^2_2=1\}$ is a non-convex set.
\end{lemma}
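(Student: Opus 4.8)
The plan is to establish non-convexity in the standard way, namely by exhibiting an explicit counterexample: two points ${\bf u},{\bf v}\in\mathcal{C}$ whose midpoint $\tfrac12({\bf u}+{\bf v})$ fails to lie in $\mathcal{C}$. The structural observation that makes this work is that $\mathcal{C}$ is the intersection of the \emph{convex} open box $(0,1)^n$ with the unit sphere $\{{\bf x_1}\in\Re^n\mid\;||{\bf x_1}||_2^2=1\}$, so any failure of convexity must be forced by the spherical (norm) constraint rather than by the box. Throughout I assume $n\geq 2$; for $n=1$ the set $\mathcal{C}$ is empty and hence vacuously convex, so the statement is to be read in the regime of interest ($n\geq 2$), which is the only relevant case for networks.

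First I would check that $\mathcal{C}$ contains at least two distinct points with strictly positive entries. Fix any $b\in(0,\,1/\sqrt{n-1})$ with $b\neq 1/\sqrt{n}$, and put $a=\sqrt{1-(n-1)b^2}$, so that $a\in(0,1)$ and $a\neq b$ (indeed $a=b$ would force $nb^2=1$, which is excluded). Then ${\bf u}=(a,b,b,\ldots,b)$ and ${\bf v}=(b,a,b,\ldots,b)$, the vector obtained from ${\bf u}$ by swapping its first two coordinates, both lie in $(0,1)^n$, both satisfy $||{\bf u}||_2^2=||{\bf v}||_2^2=a^2+(n-1)b^2=1$, and ${\bf u}\neq{\bf v}$. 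Hence ${\bf u},{\bf v}\in\mathcal{C}$.

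Next I would examine the midpoint ${\bf w}=\tfrac12({\bf u}+{\bf v})=\bigl(\tfrac{a+b}{2},\tfrac{a+b}{2},b,\ldots,b\bigr)$. Every coordinate of ${\bf w}$ is an average of numbers in $(0,1)$, so ${\bf w}\in(0,1)^n$; the box constraint is harmless. It remains to evaluate the norm: a one-line computation gives $||{\bf u}||_2^2-||{\bf w}||_2^2=a^2+b^2-\tfrac{(a+b)^2}{2}=\tfrac{(a-b)^2}{2}>0$, so $||{\bf w}||_2^2=1-\tfrac{(a-b)^2}{2}<1$. Thus ${\bf w}\notin\mathcal{C}$ although ${\bf u},{\bf v}\in\mathcal{C}$, which proves that $\mathcal{C}$ is not convex. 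Conceptually this is nothing more than the familiar fact that ${\bf x}\mapsto||{\bf x}||_2^2$ is strictly convex, so the midpoint of two distinct points on one of its level sets always lies strictly inside the sphere.

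I do not expect a genuine obstacle here; the work is bookkeeping rather than ideas. The two points to be careful about are (i) that the witness vectors genuinely lie in $\mathcal{C}$ — i.e., have entries \emph{strictly} between $0$ and $1$ and sit \emph{exactly} on the sphere — and (ii) that the reason the midpoint escapes $\mathcal{C}$ is the norm constraint, not the box, so that the interplay between the open box and the sphere is not doing anything unexpected. The only place the hypothesis $n\geq 2$ is needed is to rule out the degenerate empty case $n=1$.
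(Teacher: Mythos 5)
Your proof is correct and uses essentially the same idea as the paper's: the midpoint of two distinct points on the unit sphere has squared norm strictly less than $1$, so it escapes $\mathcal{C}$. The paper argues this for arbitrary distinct ${\bf x},{\bf y}\in\mathcal{C}$ via the identity $\sum_i z_i^2=\theta^2+(1-\theta)^2+2\theta(1-\theta)\sum_i x_iy_i$ at $\theta=1/2$, whereas you instantiate explicit witnesses and, somewhat more carefully than the paper, verify that $\mathcal{C}$ actually contains two distinct points when $n\geq 2$.
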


\begin{proof}
A set $\mathcal{C} \subseteq \Re^n$ is called convex if for any ${\bm{x}, \bm{y}} \in \mathcal{C}$, ${\bm{x}\neq \bm{y}}$ and any $\theta \in [0, 1]$, the point $\theta \bm{x} + (1-\theta)\bm{y}$ belongs to $\mathcal{C}$ \cite{nonlinear_opt_2014}. 
To validate $\mathcal{C}$ as a non-convex set, any arbitrary point $\bm{z} \in \Re^n $  has been considered and it can be written as a convex combinations of $\bm{x}$ and $\bm{y}$ i.e., $\bm{z}=\theta \bm{x} + (1-\theta)\bm{y}$ by choosing
an arbitrary value of $\theta$. Thus, we have
\begin{equation}
(z_1,z_2,\ldots,z_n)=(\theta x_1+(1-\theta)y_1,\theta x_2+(1-\theta)y_2,\ldots,\theta x_n+(1-\theta)y_n) 
\end{equation}
From the above equation, we get, 
\begin{equation}
\sum_{i=1}^n z_i^2=\theta^2+(1-\theta)^2+2\theta(1-\theta)\sum_{i=1}^n x_iy_i 
\end{equation}
Now, to check the convexity, one has to show that $\bm{z} \in \mathcal{C}$, i.e., $\sum_{i=1}^n z_i^2=1$.
Since ${\bm{x}\neq \bm{y}}$, it gives $\sum_{i=1}^n x_iy_i\neq 1$. Now, for specific $\theta=1/2$ , $\sum_{i=1}^n z_i^2 \neq 1$, this implies that the relation  $\sum_{i=1}^n z_i^2=1$ does not satisfy for any arbitrary value of $\theta$. Hence, $\bm{z} \notin \mathcal{C}$ and therefore, $\mathcal{C}$ is a non-convex set.
\end{proof}
\begin{theorem}
 Considering $\zeta(\bm{x}_1)$ as an objective function, principal eigenvector localization over undirected network is a non-convex optimization problem. 
\end{theorem}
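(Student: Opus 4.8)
The plan is to obtain the theorem directly from the two preceding lemmas together with the textbook characterization of a convex optimization problem \cite{nonlinear_opt_2014}: a problem is convex only if it can be written as the minimization of a convex objective (equivalently, the maximization of a concave objective) over a convex feasible set, so violating \emph{either} of these two requirements suffices to place the problem outside the convex class. Accordingly, I would argue that in the formulation set up in Section~3 both requirements fail.

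First I would treat the feasible region. The problem searches for a vector ${\bf x_1}$ that (i) has unit $\ell_2$ norm, (ii) has strictly positive entries (the Perron--Frobenius condition encoding connectedness), and (iii) is realizable as the PEV of an irreducible binary symmetric matrix. Dropping the combinatorial constraint~(iii) yields precisely the set $\mathcal{C}=\{{\bf x_1}\in(0,1)^n:\|{\bf x_1}\|_2^2=1\}$, which Lemma~2 already shows to be non-convex; since $\mathcal{C}$ is a piece of the unit sphere, intersecting it with the further realizability constraint~(iii) cannot turn it into a convex set. Hence the feasible region of the optimization problem is non-convex. Second I would treat the objective: the problem \emph{maximizes} $\zeta({\bf x_1})=\sum_{i=1}^n (x_1)_i^4$, which by Lemma~1 is convex (indeed strictly convex, hence non-affine) on $(0,1)^n$; rewriting the maximization as the minimization of $-\zeta({\bf x_1})$ produces a concave, non-affine objective, which is not convex. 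Thus both the objective in minimization form and the feasible set fail the defining conditions, so the problem is not a convex optimization problem, i.e.\ it is a non-convex optimization problem.

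The step I expect to require the most care is not a calculation but the logical framing of the conclusion: the theorem does not assert that no equivalent convex reformulation of the task could ever exist, only that the natural formulation given in Section~3 — maximize the IPR of the PEV over the stated constraints — fails the defining properties of a convex program. I would therefore phrase the proof at exactly that level, emphasizing that the non-convexity is intrinsic to the chosen objective $\zeta$ (a convex function being maximized) and to the spherical, positivity, and integrality constraints, and is not an artifact that a routine change of variables removes.
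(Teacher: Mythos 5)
Your proposal is correct and follows essentially the same route as the paper: invoke Lemma~1 for convexity of $\zeta$, Lemma~2 for non-convexity of the feasible set $\mathcal{C}$, and the textbook characterization of a convex program from \cite{nonlinear_opt_2014} to conclude. The extra remarks you add (on the realizability constraint and on the scope of the conclusion) are sensible but not needed beyond what the paper's own argument contains.
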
  
\begin{proof}
It is notable from Lemma 1 that the objective function $\zeta(\bm{x}_1)$ is a convex function but on the other hand, Lemma 2, says that the constraint, $\mathcal{C}=\{ {\bm{x}_1} \in (0,1)^n|\;||{\bm{x}_1}||^2_2=1\}$ is a non-convex set. By definition, a convex optimization problems consist of minimizing of a convex functions over convex sets, or maximizing a concave functions over convex sets \cite{nonlinear_opt_2014}.
Jointly, conflicting characteristic of constraint and objective function shows that the principal eigenvector localization over simple undirected, and unweighted network is a non-convex optimization problem.
\end{proof}


\begin{thebibliography}{0}


\bibitem{pevec_nat_phys_2013} J. Aguirre, D. Papo, and J. M. Buld\'{u}, Successful strategies for competing networks, Nat. Phys. {\bf 9}, 230 (2013).

\bibitem{castellano_localization_2017} C. Castellano and R. Pastor-Satorras, Topological determinants of complex networks spectral properties: structural and dynamical effects, Phys. Rev. X {\bf 7}, 041024 (2017).

\bibitem{evec_localization_2017} P. Pradhan, A. Yadav, S. K. Dwivedi, and S. Jalan, Optimized evolution of networks for principal eigenvector localization, Phys. Rev. E {\bf 96} , 022312 (2017).

\bibitem{evec_localization_2020} P. Pradhan and S. Jalan, From Spectra to Localized Networks: A Reverse Engineering Approach, IEEE Transactions on Network Science and Engineering {\bf 7}(4), (2020).

\bibitem{evec_loc_jaccard_2021} D. Mohapatra, Jaccard Guided Perturbation for Eigenvector Localization, New Generation Computing {\bf 39}, 159-179 (2021).

\bibitem{neurons_localization_2014} R. Chaudhuri, A. Bernacchia, and X. Wang, A diversity of localized timescales in network activity, eLIFE {\bf 3}, e01239 (2014).

\bibitem{evec_loc_temporal_network_2017} D. Taylor, S. A. Myres, Aaron Clauset, M. A. Porter, and P. J. Mucha, Eigenvector-based Centrality Measures for 
Temporal networks, SIAM Multiscale Model. Simul., vol. 15, No. 1, pp. 537-574, (2017).

\bibitem{newman_book_2010} M. E. J. Newman, \emph{Networks: An Introduction}, Oxford University Press, (2010). 

\bibitem{machine_learning_2009} Ravindran Kannan and Santosh Vempala, Spectral Algorithms, Journal Foundations and Trends in Theoretical Computer Science, vol 4, Issue 3-4, pp. 157-288, (2009).

\bibitem{deg_pev_corr_2012} C. Li, H. Wang and P. V. Mieghem, Degree and Principal Eigenvectors in Complex, Networks, Networking, Lecture Notes in Computer Science, vol 7289, 149 (2012).

\bibitem{message_passing_2017} G. Tim\'{a}r, R. A. da Costa, S. N. Dorogovtsev, and J. F. F. Mendes, Nonbacktracking expansion of finite graphs, Phys. Rev. E {\bf 95}, 042322 (2017).

\bibitem{sensitivity_dynamics_2017} T. Nishikawa, J. Sun, and A. E. Motter, Sensitive Dependence of Optimal Network Dynamics on Network Structure, Phys. Rev. X {\bf 7}, 041044 (2017).

\bibitem{community_2010} F. Slanina, and Z. Konopasek, Eigenvector Localization as a tool to study small communities in Online Social Networks, World Scientific, Advances in Complex Systems, (2010).

\bibitem{community_2015} Think locally, act locally: Detection of small, medium-sized, and large communities in large networks, Phys. Rev. E {\bf 91}, 012821 (2015).

\bibitem{mux_community_2017} D. Taylor, R. S. Caceres, and P. J. Mucha, Super-Resolution Community Detection for Layer-Aggregated Multilayer Networks, Phys. Rev X {\bf 7}, 031056 (2017).

\bibitem{anderson_loc_1985} P. A. Lee and T. V. Ramakrishnam, Disordered electronic systems {\it Rev. Mod. Phys.} \textbf{57}, 287 (1985).

\bibitem{loc_math_1_2010} Y. Dekel, J. R. Lee, N. Linial, Eigenvectors of Random Graphs: Nodal Domains, Random Structures \& Algorithms {\bf 39}, 39 (2011).

\bibitem{loc_math_2_2013} C. Bordenave and A. Guionnet, Localization and delocalization of eigenvectors for heavy-tailed random matrices {\it Probability Theory and Related Fields} \textbf{157}, 885-953 (2013).

\bibitem{loc_math_3_2003} X. Liu, G. Strang, and S. Ott, Localized eigenvectors from widely spaced matrix modifications,  SIAM J. Discrete Math., 
vol. 16, No. 3, pp. 479-498, (2003).

\bibitem{approx_algo_2015} David F. Gleich, Michael W. Mahoney, Using Local Spectral Methods to Robustify Graph-Based Learning Algorithms, KDD’15, Sydney, NSW, Australia, 10-13, (2015).

\bibitem{machine_learning_loc_2016} Pan Zhang, Robust Spectral Detection of Global Structures in the Data by Learning a Regularization, 29th Conference on Neural Information Processing Systems (NIPS), Barcelona, Spain (2016).

\bibitem{loc_invariant_subspace_2011} C. V\"{o}mel, and B. N. Parlett, Detecting localization in an invariant subspace, Society for industrial and Applied Mathematics (SIAM) {\bf 33}, 3447 (2011).

\bibitem{anderson_loc_linear_alg_1999}U. Elsner, V. Mehrmann, F. Milde, R. A. Romer, and M. Schreiber, The Anderson model of localization: A challenge for modern eigenvalue methods, SIAM J. Sci. Comput., vol. 20, No. 6, pp. 2089-2102, (1999).

\bibitem{localization_in_mat_2016} M. Benzi, Localization in Matrix Computations: Theory and Applications, Springer International Publishing AG, (2016).

\bibitem{dynamic_reconfig_2011} D. S. Bassett et al., Dynamic reconfiguration of human brain networks during learning, PNAS \textbf{108}, 7641 (2011). 

\bibitem{rev_Strogatz_2001} S. H. Strogatz, Exploring complex networks, Nature \textbf{410}, 268 (2001).






\bibitem{quantum_internet} Suzanne van Dam, How can we speed up the quantum internet? \url{https://blog.qutech.nl/2018/11/26/how-can-we-speed-up-the-quantum-internet/}

\bibitem{GPS_network} https://www.businessinsider.com/spacex-watch-live-starlink-internet-satellites-rocket-launch-2019-5

\bibitem{COVID-19} Jin Wu, Weiyi Cai, Derek Watkins and James Glanz, How the Virus Got Out, The New York Times, March 22, 2020.


\bibitem{miegham_book2011} P. V. Mieghem, \emph{Graph Spectra for Complex Networks}, Cambridge University Press, (2011).

\bibitem{search_space} F. Harary and E. Palmer, Graphical Enumeration (Academic Press, New York, 1973).

\bibitem{MZN2014} T. Martin, X. Zhang, and M. E. J. Newman, Localization and Centrality in Networks, Physical Review E {\bf 90}, 052808 (2014).

\bibitem{Goltsev_prl2012} A. V. Goltsev, S. N. Dorogovtsev, J. G. Oliveira, and J. F. F. Mendes, Localization and spreading of diseases in complex networks, Phys. Rev. Lett. \textbf{109}, 128702 (2012).






\bibitem{low_order_evec_2013} M. Cucuringu, V. D. Blondel, and P. V. Dooren, Extracting spatial information from networks with low-order eigenvectors
Phys. Rev. E {\bf 87}, 032803 (2013).

\bibitem{leverage_score_2011} M. Cucuringu and M. W. Mahoney, Localization on low-order eigenvectors of data matrices, arxiv:1109.1355v1, (2011).






\bibitem{nonlinear_opt_2014}A. Beck, \emph{Introduction to nonlinear optimization Theory, Algorithms, and Applications with MATLAB}, Society for Industrial and Applied Mathematics, (2014). 


\bibitem{deloc_pev} L. V. Tran, V. H. Vu, and K. Wang, Sparse Random Graphs: Eigenvalues and Eigenvectors, Random Structures \& Algorithms {\bf 42}, 110 (2013). 

\bibitem{bound_pev_entries_2000} B. Papendieck, P. Recht, On maximal entries in the principal eigenvector of graphs, Elsevier Linear Algebra and its Applications {\bf 310}, 129 (2000). 

\bibitem{algorithms_cormen_2009} T. H. Cormen, C. E. Leiserson, R. L. Rivest, and C. Stein. \emph {Introduction to Algorithms, $3^{rd}$ ed.}, MIT Press Cambridge, (2009).

\bibitem{simuated_annealing_1983} S. Kirkpatrick, C. D. Gelatt, M.P. Vecchi, Optimization by Simulated Annealing, Science \textbf{220}, 4598 (1983).


\end{thebibliography}
\end{document}